\newcommand\set[1]{\{ #1 \}}
\newcommand\tuple[1]{{\langle #1 \rangle}}
\newcommand{\REFlem}[1]{\text{Lem.~\ref{#1}}}
\newcommand{\REFthm}[1]{\text{Thm.~\ref{#1}}}
\newcommand{\REFalg}[1]{Alg.~\ref{#1}}
\newcommand{\REFsec}[1]{Sec.~\ref{#1}}
\newcommand{\REFfig}[1]{Fig.~\ref{#1}}
\newcommand{\Cpre}{\mathrm{CPre}}
\newcommand{\FCpre}[1]{\Cpre_{\Sa{#1}}}
\newcommand{\UA}[1]{\Gamma_{#1}}
\newcommand{\ON}[1]{\operatorname{#1}}
\def\clap#1{\hbox to 0pt{\hss#1\hss}}
\newif\ifFIRST
\newif\ifSECOND
\let\LISTOP\relax
\newcommand{\List}[4][\;]{#3#1%
	\FIRSTtrue
	\@for\i:=#2\do{%
	\ifFIRST\LISTOP{\i}\FIRSTfalse\else,\LISTOP{\i}\fi%
	}%
	#1#4%
	\let\LISTOP\relax
}
\newcounter{DINGLIST}
\newcommand{\markD}[3][\;\;]{\text{\ding{\the\numexpr171+\theDINGLIST}\stepcounter{DINGLIST}}#1#3}
\newcommand{\propNeg}{\@ifstar\propNegStar\propNegNoStar}
\newcommand{\propNegStar}[1]{\ensuremath{\left(\propNegNoStar{#1}\right)}}
\newcommand{\propNegNoStar}[2][\cdot]{\ensuremath{\neg\ifthenelse{\isempty{#2}}{#1}{#2}}}
\newcommand{\propConj}{\@ifstar\propConjStar\propConjNoStar}
\newcommand{\propConjStar}[2]{\ensuremath{\left(\propConjNoStar{#1}{#2}\right)}}
\newcommand{\propConjNoStar}[3][\cdot]{\ensuremath{\ifthenelse{\isempty{#2}}{#1}{#2}\wedge\ifthenelse{\isempty{#3}}{#1}{#3}}}
\newcommand{\propDisj}{\@ifstar\propDisjStar\propDisjNoStar}
\newcommand{\propDisjStar}[2]{\ensuremath{\left(\propDisjNoStar{#1}{#2}\right)}}
\newcommand{\propDisjNoStar}[3][\cdot]{\ensuremath{\ifthenelse{\isempty{#2}}{#1}{#2}\vee\ifthenelse{\isempty{#3}}{#1}{#3}}}
\newcommand{\propImp}{\@ifstar\propImpStar\propImpNoStar}
\newcommand{\propImpStar}[2]{\ensuremath{\left(\propImpNoStar{#1}{#2}\right)}}
\newcommand{\propImpNoStar}[3][\cdot]{\ensuremath{\ifthenelse{\isempty{#2}}{#1}{#2}\Rightarrow\ifthenelse{\isempty{#3}}{#1}{#3}}}
\newcommand{\propAequ}{\@ifstar\propAequStar\propAequNoStar}
\newcommand{\propAequStar}[2]{\ensuremath{\left(\propAequNoStar{#1}{#2}\right)}}
\newcommand{\propAequNoStar}[3][\cdot]{\ensuremath{\ifthenelse{\isempty{#2}}{#1}{#2}\Leftrightarrow\ifthenelse{\isempty{#3}}{#1}{#3}}}
\newcommand{\AllQ}{\@ifstar\AllQStar\AllQNoStar}
\newcommand{\AllQStar}[3][\;]{\ensuremath{\left(\forall #2#1.#1#3\right)}}
\newcommand{\AllQNoStar}[3][\;]{\ensuremath{\forall #2#1.#1#3}}
\newcommand{\AllQu}{\@ifstar\AllQuStar\AllQuNoStar}
\newcommand{\AllQuStar}[3][\;]{\ensuremath{\left(\forall^{\infty} #2#1.#1#3\right)}}
\newcommand{\AllQuNoStar}[3][\;]{\ensuremath{\forall^{\infty} #2#1.#1#3}}
\newcommand{\ExQ}{\@ifstar\ExQStar\ExQNoStar}
\newcommand{\ExQStar}[3][\;]{\ensuremath{\left(\exists #2#1.#1#3\right)}}
\newcommand{\ExQNoStar}[3][\;]{\ensuremath{\exists #2#1.#1#3}}
\newcommand{\NExQ}{\@ifstar\NExQStar\NExQNoStar}
\newcommand{\NExQStar}[3][\;]{\ensuremath{\left(\nexists #2#1.#1#3\right)}}
\newcommand{\NExQNoStar}[3][\;]{\ensuremath{\nexists #2#1.#1#3}}
\newcommand{\UniqueQ}{\@ifstar\UniqueQStar\UniqueQNoStar}
\newcommand{\UniqueQStar}[3][\;]{\ensuremath{\left(\exists! #2#1.#1#3\right)}}
\newcommand{\UniqueQNoStar}[3][\;]{\ensuremath{\exists! #2#1.#1#3}}
  \newlength{\SFS@HEIGHT}
  \newlength{\SFS@WIDTH}
  \newcommand{\SplitX}[2]{
	    \settoheight{\SFS@HEIGHT}{$#2$}
	    \settowidth{\SFS@WIDTH}{$#2$}
	    \mbox{\begin{tikzpicture}[baseline=(current bounding box.center)]
	    \node[] (E) at (0,0) {$#1$};
	    \node[inner sep=0pt] (F) at ($(E.south west)+(1ex,-1ex)+(3ex+.5\SFS@WIDTH,-\SFS@HEIGHT)$) {$#2$};
	    \node[] (E) at (0,0) {\phantom{$#1$}};
	    \draw[fill] ($(E.east)+(1ex,0ex)$) circle (.2ex);
	    \draw[-] ($(E.east)+(1ex,0ex)$) -- ($(E.south east)+(1ex,-0.5ex)$) -- ($(E.south west)+(1ex,-0.5ex)$) -- ($(E.south west)+(1ex,-1ex)-(0,\SFS@HEIGHT)$) -- ($(E.south west)+(2.5ex,-1ex)-(0,\SFS@HEIGHT)$);
	    \draw[fill] ($(E.south west)+(2.5ex,-1ex)-(0,\SFS@HEIGHT)$) circle (.2ex);
	    \end{tikzpicture}}}
  \newcommand{\SplitS}[2]{
	    \settoheight{\SFS@HEIGHT}{$#2$}
	    \settowidth{\SFS@WIDTH}{$#2$}
	    \mbox{\begin{tikzpicture}[baseline=(current bounding box.center)]
	    \node[] (E) at (0,0) {$#1$};
	    \node[inner sep=0pt] (F) at ($(E.south west)+(1ex,0.5ex)+(3ex+.5\SFS@WIDTH,-\SFS@HEIGHT)$) {$#2$};
	    \end{tikzpicture}}}
\newcommand{\Set}[2][]{\List[#1]{#2}{\left\{}{\right\}}}
\newcommand{\VSet}[2][]{\let\LISTOP\val\List[#1]{#2}{\{}{\}}}
\newcommand{\VTuple}[2][]{\let\LISTOP\val\List[#1]{#2}{(}{)}}
\newcommand{\UNION}{\@ifstar\UNIONStar\UNIONNoStar}
\newcommand{\UNIONStar}[2]{\ensuremath{\left(\UNIONNoStar{#1}{#2}\right)}}
\newcommand{\UNIONNoStar}[2]{\ensuremath{\ifthenelse{\isempty{#1}}{\cdot}{#1}\cup\ifthenelse{\isempty{#2}}{\cdot}{#2}}}
\newcommand{\UNIOND}{\@ifstar\UNIONDStar\UNIONDNoStar}
\newcommand{\UNIONDStar}[2]{\ensuremath{\left(\UNIONDNoStar{#1}{#2}\right)}}
\newcommand{\UNIONDNoStar}[2]{\ensuremath{\ifthenelse{\isempty{#1}}{\cdot}{#1}\uplus\ifthenelse{\isempty{#2}}{\cdot}{#2}}}
\newcommand{\SETMINUS}{\@ifstar\SETMINUSStar\SETMINUSNoStar}
\newcommand{\SETMINUSStar}[2]{\ensuremath{\left(\SETMINUSNoStar{#1}{#2}\right)}}
\newcommand{\SETMINUSNoStar}[2]{\ensuremath{\ifthenelse{\isempty{#1}}{\cdot}{#1}\setminus\ifthenelse{\isempty{#2}}{\cdot}{#2}}}
\newcommand{\INTERSECT}{\@ifstar\INTERSECTStar\INTERSECTNoStar}
\newcommand{\INTERSECTStar}[2]{\ensuremath{\left(\INTERSECTNoStar{#1}{#2}\right)}}
\newcommand{\INTERSECTNoStar}[2]{\ensuremath{\ifthenelse{\isempty{#1}}{\cdot}{#1}\cap\ifthenelse{\isempty{#2}}{\cdot}{#2}}}
\newcommand{\CARTPROD}{\@ifstar\CARTPRODStar\CARTPRODNoStar}
\newcommand{\CARTPRODStar}[2]{\ensuremath{\left(\CARTPRODNoStar{#1}{#2}\right)}}
\newcommand{\CARTPRODNoStar}[2]{\ensuremath{\ifthenelse{\isempty{#1}}{\cdot}{#1}\times\ifthenelse{\isempty{#2}}{\cdot}{#2}}}
\newcommand{\FINCOUNT}{\@ifstar\FinCountStar\FinCountNoStar}
\newcommand{\FinCountStar}[1]{\ensuremath{\#(\ifthenelse{\isempty{#1}}{\cdot}{#1})}}
\newcommand{\FinCountNoStar}[1]{\ensuremath{\#\left(\ifthenelse{\isempty{#1}}{\cdot}{#1}\right)}}
\newcommand{\real}[1]{\ifstrempty{#1}{\mathbb{R}}{\mathbb{R}^{#1}}}
\newcommand{\Z}{\mathbb{Z}}
\newcommand{\fun}{\ensuremath{\ON{\rightarrow}}}
\newcommand{\SetComp}[3][]{\{#1#2#1\mid#1#3#1\}}
\newcommand{\twoup}[1]{\ensuremath{2^{#1}}} 
\newcommand{\dom}[1]{\ensuremath{\mathrm{dom}(#1)}}
\newcommand{\Beh}[1]{\ensuremath{\mathcal{B}(#1)}}
\newcommand{\img}[1]{\ensuremath{\mathrm{img}(#1)}}
 \newcommand{\Behaclset}{\ensuremath{\mathcal{B}(\Saset^{cl})}}
 \newcommand{\Behtclset}{\ensuremath{\mathcal{B}(\Stset^{cl})}}
\newcommand{\hyint}[1]{\ensuremath{\llbracket #1\rrbracket}}
\newcommand{\tn}[1]{{#1}}
\newcommand{\nindex}[1]{{#1}}
\newcommand{\n}[1]{{\eta_{#1}}}
\newcommand{\ta}[1]{{\tau_{#1}}}
\newcommand{\tindex}[1]{{#1}}
\newcommand{\frr}[1]{\preccurlyeq_{#1}}
\newcommand{\layer}{\ensuremath{{l\in[1;L]}}\xspace}
\newcommand{\St}[1]{\ensuremath{\overrightarrow{S}_{\tindex{#1}}}}
\newcommand{\Ft}[1]{\ensuremath{\overrightarrow{F}_{\tindex{#1}}}}
\newcommand{\Sa}[1]{\ensuremath{\widehat{S}_{\tn{#1}}}}
\newcommand{\Xa}[1]{\ensuremath{\widehat{X}_{\nindex{#1}}}}
\newcommand{\Ua}{\ensuremath{\widehat{U}}}
\newcommand{\Fa}[1]{\ensuremath{\widehat{F}_{\tn{#1}}}}
\newcommand{\Qa}[1]{\ensuremath{\widehat{Q}_{\nindex{#1}}}}
\newcommand{\Qai}[1]{\ensuremath{\widehat{Q}^{-1}_{\nindex{#1}}}}
\newcommand{\Ra}[1]{\ensuremath{\widehat{R}_{#1}}}
\newcommand{\xa}{\ensuremath{\widehat{x}}}
\newcommand{\xia}{\ensuremath{\widehat{\xi}}}
\newcommand{\xit}{\ensuremath{\overrightarrow{\xi}}}
\newcommand{\ua}{\ensuremath{\widehat{u}}}
\newcommand{\Ci}[1]{\ensuremath{C^{#1}}}
\newcommand{\Uci}[1]{\ensuremath{B^{#1}}}
\newcommand{\Gci}[1]{\ensuremath{G^{#1}}}
\newcommand{\Uc}{\ensuremath{B}}
\newcommand{\Xaall}{\ensuremath{\widehat{\textbf{X}}}}
\newcommand{\WIN}{\ensuremath{\mathcal{C}}}
\newcommand{\Aux}{\ensuremath{\Upsilon}}
\newcommand{\Cset}{\ensuremath{\mathbf{C}}}
\newcommand{\Stset}{\ensuremath{\overrightarrow{\mathbf{S}}}}
\newcommand{\Saset}{\ensuremath{\widehat{\mathbf{S}}}}
\newcommand{\Qset}{\ensuremath{\mathbf{Q}}}
\newcommand{\Ftall}{\ensuremath{\overrightarrow{\mathbf{F}}}}
\newcommand{\Faall}{\ensuremath{\widehat{\mathbf{F}}}}
\newcommand{\Tset}{\ensuremath{T}}
\newcommand{\Tseta}[1]{\ensuremath{\widehat{\Tset}_{#1}}}
\newcommand{\Ya}[1]{\ensuremath{\widehat{Y}_{\tn{#1}}}}
\theoremstyle{definition}
\newtheorem{theorem}{Theorem}
\newtheorem{lemma}{Lemma}
\title{\LARGE \bf
Lazy Abstraction-Based Control for Safety Specifications
}
\author{Kyle Hsu, Rupak Majumdar, Kaushik Mallik, Anne-Kathrin Schmuck
\thanks{K.~Hsu is with University of Toronto, Canada and R.~Majumdar, K.~Mallik and A.-K.~Schmuck are with MPI-SWS, Kaiserslautern, Germany
        {\tt\small \{kylehsu,rupak,kmallik,akschmuck\}@mpi-sws.org}}%
}
\begin{document}

\maketitle
\thispagestyle{empty}
\pagestyle{empty}

\begin{abstract}
We present a \emph{lazy} version of multi-layered abstraction-based controller synthesis (ABCS) for continuous-time nonlinear dynamical
systems against safety specifications.
State-of-the-art multi-layered ABCS uses pre-computed finite-state abstractions of different coarseness.
Our new algorithm improves this technique by computing transitions on-the-fly,
and only when a particular region of the state space needs to be explored by the controller synthesis algorithm for
a specific coarseness.
Additionally, our algorithm improves upon existing techniques by using coarser 
cells on a larger subset of the state space, which leads to significant computational savings.
%

\end{abstract}


\setlength{\textfloatsep}{5pt}

\section{Introduction}

Abstraction-based controller synthesis (ABCS) is a general three-step procedure for the automatic synthesis of
controllers for non-linear dynamical systems \cite{TabuadaBook, Girard12}. 
First, a time-sampled version of the continuous dynamics of the open-loop system is abstracted by a symbolic finite state model. 
Second, algorithms from reactive synthesis are used to synthesize a discrete controller on the abstract system. 
Third, the abstract controller is refined to a controller for the concrete system. 

The abstract system can be constructed by fixing a parameter $\tau$ for the sample time and
a parameter $\eta$ for the state and input spaces. 
The abstract state space is then represented as a set of hypercubes, each of diameter $\eta$,
and the abstract transition relation is constructed by adding a transition between two hypercubes iff there exists some state
in the first hypercube which can reach some state of the second by following the original dynamics for time $\tau$.
This construction establishes a \emph{feedback refinement relation} (FRR) \cite{ReissigWeberRungger_2017_FRR} between the concrete system and the abstract system which is commonly used to prove soundness of ABCS. 

The success of ABCS depends on the choice of $\eta$ and $\tau$. 
A large $\eta$ (and $\tau$)\footnote{
	$\tau$ is increased along with $\eta$ to reduce non-determinism due to self loops.}
results in an imprecise abstract transition relation with a small state space, while a small $\eta$ (and $\tau$) results in a precise abstraction with a large state space.
Thus, for a large $\eta$ one may not be able to find a controller while a small $\eta$ can make the synthesis problem computationally intractable.
Thus, recent approaches to ABCS use a \emph{multi-layered} technique, where one constructs several
``layers'' of abstractions using hypercube partitions defined by progressively larger $\eta$ and $\tau$ \cite{CameraGirardGoessler_safety_2011,CameraGirardGoessler_reach_2011,girard2006towards,Girard2016_InterSampling,HsuMajumdarMallikSchmuck_HSCC18}.
Here, the controller synthesis procedure tries to find a controller for the coarsest abstraction whenever 
feasible, but adaptively considers finer abstractions when necessary. 
The common bottleneck of these approaches is that the full abstract transition system for every granularity needs to be pre-computed.
In this paper, we propose a \emph{lazy} multi-layered algorithm that
reduces this computational overhead by computing transitions on-the-fly, and
only when a particular region of the state space needs to be explored by the controller synthesis algorithm 
for a specific choice of $\eta$ and $\tau$. 

We start with a backward symbolic algorithm for safety control, \`a la reactive synthesis.
We use the multi-layered $\omega$-regular synthesis approach of Hsu et al.~\cite{HsuMajumdarMallikSchmuck_HSCC18},
but improve upon that algorithm by interleaving fixpoint computations in different abstraction layers.
Theoretically, safe states in finer layers can be used when iterating in coarser layers.
Empirically, this allows the algorithm to use coarser cells on a larger subset of the state space.

\subsection{Motivating Example}\label{sec:spiral}
We show the advantage of our algorithm over the one in \cite{HsuMajumdarMallikSchmuck_HSCC18} using an example.
Consider a simple dynamical system in polar coordinates:
\begin{align}\label{equ:spiral}
	\dot{r} = -0.1r + u		&&	\dot{\theta} = 1
\end{align}
where $r$ and $\theta$ represent the radius and the angle respectively, and $u$ represents the control input. The resulting dynamics generate a circular motion of its trajectories in a two-dimensional Cartesian state space, where the input controls the radius of this motion. The control problem is to avoid the static obstacles in the state space, depicted in black in \REFfig{fig:spiral}. An efficient multi-layered safety controller synthesis algorithm would use coarse grid cells almost everywhere in the state space and would use finer grid cells only close to the obstacles, where the control action needs to be precise.

While the idea is conceptually simple, the implementation is challenging due to the following observations. To ensure safety, one wants to find the largest invariant set within the safe set. To obtain the described behavior, this invariant set needs to consist of cells with different coarseness. To compute this using established abstraction-refinement techniques as in, e.g., \cite{AlfaroRoy_2010}, one needs a common game graph representation connecting states of different coarseness. However, due to the absence of an FRR between different layers of abstraction (see \cite{HsuMajumdarMallikSchmuck_HSCC18} for an in-depth discussion of this issue) and the use of different sampling times for different layers, we do not have such a representation. We can therefore only run iterations of the safety fixed-point for a particular layer, but not for combinations of them.

In \cite{HsuMajumdarMallikSchmuck_HSCC18} this problem is circumvented by computing the safety fixed-point for all layers until termination, starting with the coarsest. This implies that coarse grid cells are only used by the resulting controller if they form an invariant set among themselves. For our toy example, this corresponds to the small green region depicted in \REFfig{fig:spiral} (top).

We improve upon this result by proposing a new algorithm in this paper which keeps iterating over all layers until termination of the fixed-point. 
This has the effect that clusters of finer cells which can be controlled to be safe by a suitable controller in the corresponding layer
are considered safe in the coarser layers as well, which influences further iterations of the safety fixed-point in those layers. 
This results in the desired behavior for this example shown in \REFfig{fig:spiral} (bottom), 
where almost the whole controller domain is covered by the coarsest layer cells (depicted in green) and finer layers are only used around the obstacles and at the boundary of the safe set. 
\begin{figure}
\centering
 \includegraphics[width=\columnwidth]{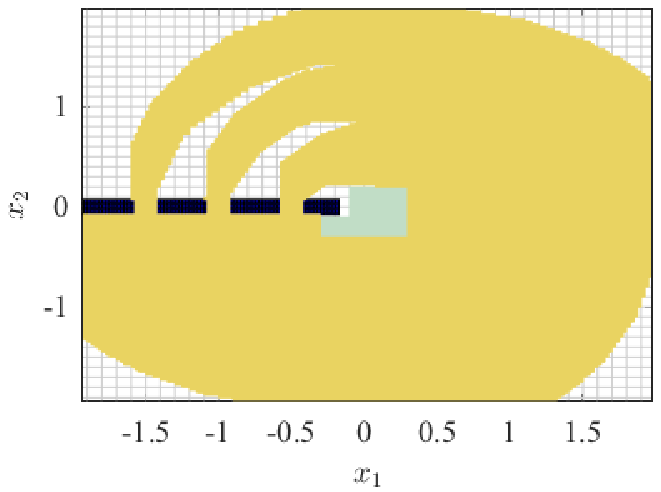}
%
 \includegraphics[width=1\columnwidth]{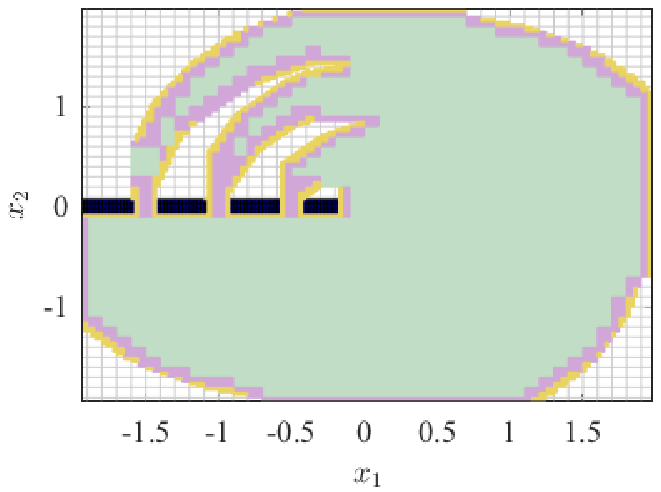}
 \caption{Resulting controller domains for layer $l=1$ (yellow), $l=2$ (magenta) and $l=3$ (green) computed for the dynamics in \eqref{equ:spiral} using the algorithm in \cite{HsuMajumdarMallikSchmuck_HSCC18}, Sec.~4.3 (top) and the new algorithm presented in \REFsec{sec:ML_Safety_Frontier} of this paper (bottom). Large $l$ is coarser. Obstacles are depicted in black.}\label{fig:spiral}
\end{figure}
As expected, this leads to computational savings; our new algorithm runs 2x faster on this example than does the algorithm presented in \cite{HsuMajumdarMallikSchmuck_HSCC18}. Section~\ref{sec:experiments} also shows computational savings of our new algorithm on the DC-DC boost converter benchmark example.

\subsection{Related Work}\label{sec:related}
The multi-layered ABCS algorithm for safety specifications by Girard et al. \cite{Girard2016_InterSampling} considers a strict subclass of 
the dynamics considered in this paper and uses a modified version of approximate bisimulation relations instead of the more general FRR considered in this paper. This results in a deterministic abstract model, which allows for a forward search based technique to synthesize safety controllers. 
While forward search is usually faster for safety, it is not known how to symbolically handle external disturbances and non-determinism 
in the abstraction in a forward algorithm.

Meyer et al. propose an abstraction refinement algorithm for a fragment of LTL specifications \cite{meyer2017abstraction}, where a nominal single integrator model is used to find a viable plan which is used to refine abstractions locally along the planned path. As the nominal model is not based on the system dynamics, this heuristic may or may not give quick convergence. Our approach does not suffer from this problem.

Nilsson et al. propose an abstraction refinement technique to synthesize switching protocols for switched systems and reach-avoid-stay specifications \cite{nilsson2014incremental}. The ``stay'' part of their algorithm solves a safety game while adaptively refining the abstraction as needed. This algorithm suffers from the same problem as \cite{HsuMajumdarMallikSchmuck_HSCC18} (see \REFsec{sec:spiral}).

%

\section{Preliminaries}\label{sec:prelim}


\noindent\textbf{Notation.}
Given $a,b\in\real{}\cup\set{\pm\infty}$ with $a < b$, we denote by $[a,b]$ a closed interval. 
Given $a,b\in(\real{}\cup\set{\pm\infty})^n$, we denote by $a_{i}$ and $b_{i}$ their $i$-th element. 
A \emph{cell} $\hyint{a,b}$ with $a<b$ (component-wise) is the closed set $\set{x\in\real{}^n\mid a_i\leq x_i\leq b_i}$.
We define the relations $<,\leq,\geq,>$ on $a,b$ component-wise. 
For a set $A$, we write $A^*$ and $A^\infty$ for the set of finite, and the set of finite or infinite sequences over $A$, respectively.
For $w\in A^*$, we write $|w|$ for the length of $w$; the length of an infinite sequence is $\infty$. 
For $0 \leq k < |w|$ we write $w(k)$ for the $k$-th symbol of $w$.

\smallskip
\noindent\textbf{Continuous-Time Control Systems.}\
A \emph{control system} $\Sigma = (X, U, W, f)$
consists of a state space $X= \real{n}$, a non-empty input space $U\subseteq\real{m}$, 
a compact set $W\subset \real{n}$,
and a function $f:X\times U \rightarrow \real{n}$ locally Lipschitz in the first argument s.t.
\begin{equation}\label{equ:def_f}
 \dot{\xi}\in f(\xi(t),u(t))+W
\end{equation}
holds.  
Given an initial state $\xi(0)\in X$, a positive parameter $\tau>0$ and a constant input trajectory $\mu_u:[0,\tau]\rightarrow U$
which maps every $t\in [0,\tau]$ to the same $u\in U$, 
a solution of the inclusion in \eqref{equ:def_f} 
on $[0,\tau]$ is an absolutely continuous function $\xi:[0,\tau]\rightarrow X$  
that fulfills \eqref{equ:def_f} for almost every $t\in[0,\tau]$. 
We collect all such solutions in the set $\ON{Sol}_f(\xi(0),\tau,u)$. 

\smallskip
\noindent\textbf{Time-Sampled System.}\
Given a time sampling parameter $\tau>0$, we define the \emph{time-sampled system} $\St{}(\Sigma,\tau)=(X,U,\Ft{})$ associated with $\Sigma$, 
where $\Ft{}:X\times U\fun 2^X$ is the transition function, defined s.t.\ for 
all $x\in X$ and for all $u \in U$ it holds that $x'\in \Ft{}(x,u)$ iff there exists a solution $\xi\in\ON{Sol}_f(x,\tau,u)$ 
s.t.\ $\xi(\tau)=x'$.
A \emph{trajectory} $\xit$ of $\St{}(\Sigma,\tau)$ is a finite or infinite sequence $x_0\xrightarrow{u_0}x_1\xrightarrow{u_1} \ldots$
such that for each $i\geq 0$, $x_{i+1}\in\Ft{}(x_i, u_i)$; the collection of all such trajectories defines the behavior $\Beh{\St{}(\Sigma,\tau)}\subseteq X^\infty$.

\smallskip
\noindent\textbf{Abstract Systems.}\ 
A \emph{cover} $\hat{X}$ of  $X$ is a set of non-empty cells $\hyint{a,b}$ with $a,b\in (\real{}\cup\Set{\pm\infty})^n$,
s.t.\ every $x\in X$ belongs to some cell $\xa\in\hat{X}$. 
We fix a grid parameter $\eta \in\real{}_{>0}^n$ and a global safety requirement
$Y = \hyint{\alpha, \beta}$, s.t.\ $\beta - \alpha$ is an integer multiple of $\eta$.
A point $c\in Y$ is \emph{grid-aligned} if there is $k\in\Z^n$ s.t. for each $i\in [1;n]$,
$c_i = \alpha_i + k_i\eta_i - \frac{\eta_i}{2}$.
A cell $\hyint{a,b}$ is \emph{grid-aligned} if there is a grid-aligned point $c$ s.t.\ $a = c - \frac{\eta}{2}$ and
$b = c + \frac{\eta}{2}$;
such cells define sets of diameter $\eta$ whose center-points are grid aligned. 
Clearly, the set of grid-aligned cells is a \emph{finite cover} for $Y$.

We define an \emph{abstract system} $\Sa{}(\Sigma,\tau,\eta)=(\Xa{},\Ua{},\Fa{})$ s.t.\ the following holds:
\begin{inparaenum}[(i)]
 \item $\Xa{}$ is a finite cover of $X$ and there exists a non-empty subset 
 $\Ya{}\subseteq\Xa{}$ which is a cover of $Y$ with grid aligned cells,
 \item $\Ua\subseteq U$ is finite,
 \item $\Fa{}:\Xa{}\times \Ua{}\rightarrow 2^{\Xa{}}$ is the transition function s.t. for all $\xa\in(\Xa{}\setminus\Ya{})$ and $u\in\Ua$ it holds that $\Fa{}(\xa,u)=\emptyset$, and
 \item for all $\xa\in\Ya{}$, $\xa'\in\Xa{}$, and $u\in\Ua{}$ it holds that\footnote{We use the technique explained in \cite{ReissigWeberRungger_2017_FRR} and implemented in 
\texttt{SCOTS} \cite{Scots} to over-approximate the set $\set{\cup_{x\in\xa}\ON{Sol}_f(x,\tau,\ua)}$ in \eqref{eq:next state abs sys 0}.} 
 \begin{align}\label{eq:next state abs sys 0}
   \propAequ{\xa'\in\Fa{}(\xa,u)}{\set{\cup_{x\in\xa}\ON{Sol}_f(x,\tau,\ua)} \cap \xa' \neq \emptyset}.
 \end{align}
\end{inparaenum}

We consider multiple abstract systems obtained in this way.
For parameters $\n{1} > 0$ and $\ta{1} >0$, 
and for $l\in\mathbb{Z}_{>1}$, we define $\n{l} = 2\n{l-1}$ and $\ta{l} = 2\ta{l-1}$.
With this, we obtain a sequence of $L$ time-sampled systems $\Stset:=\set{\St{l}(\Sigma,\ta{l})}_\layer$
and $L$ abstract systems $\Saset:=\set{\Sa{l}(\Sigma,\ta{l},\n{l})}_\layer$ with $\St{l}=(X,U,\Ft{l})$ and $\Sa{l}=(\Xa{l},\Ua{},\Fa{l})$\footnote{If $\Sigma$, $\tau$, and $\eta$ are
clear from the context, we omit them in $\St{l}$ and $\Sa{l}$.}.

\smallskip
\noindent\textbf{Feedback Refinement Relations.} 
Let $\Qa{}\subseteq X\times \Xa{}$ be a relation s.t.\ $(x,\xa)\in\Qa{}$ iff $x\in\xa$. Then $\Qa{}$ is a \emph{feedback refinement relation (FRR)} from $\St{}$ to $\Sa{}$ 
written $\St{}\frr{\Qa{}}\Sa{}$ (see \cite{ReissigWeberRungger_2017_FRR}, Thm. III.5).
That is, $\Qa{}$ is a strict relation,
i.e., for each $x$, there is some $\xa$ such that $(x,\xa)\in \Qa{}$, and 
for all $(x,\xa)\in \Qa{}$, we have
\begin{inparaenum}[(i)]
 \item $U_{\Sa{}}(\xa)\subseteq U_{\St{}}(x)$, and 
 \item $u\in U_{\Sa{}}(\xa) \Rightarrow \Qa{}(\Ft{}(x,u))\subseteq \Fa{}(\xa,u)$,
\end{inparaenum}
 where $U_{\St{}}(x):=\SetComp{u\in U}{\Ft{}(x,u)\neq \emptyset}$ and $U_{\Sa{}}(x):=\SetComp{u\in U}{\Fa{}(x,u)\neq \emptyset}$.

For $\Stset$ and $\Saset$, we have a sequence $\set{\Qa{l}}_\layer$ of FRRs between the corresponding systems.
The set of FRRs $\set{\Qa{l}}_\layer$ induces transformers\footnote{We extend $\Qa{}$ and $\Ra{}$ to sets of states in the obvious way.} 
$\Ra{ll'}\subseteq\Xa{l}\times\Xa{l'}$ for 
$1\leq l$, $l'\leq L$ between abstract states of different layers such that
\begin{align}\label{equ:Ra}
 \propAequ{\xa\in\Ra{ll'}(\xa')}{\xa\in \Qa{l}(\Qai{l'}(\xa')).}
\end{align}
Note that $\Ra{ll'}$ is generally \emph{not} an FRR between the layers.
%

\smallskip
\noindent\textbf{Multi-Layered Controllers and Closed Loops.}
Given a multi-layered abstract system $\Saset$, a multi-layered controller is defined as  $\Cset=\set{\Ci{l}}_{\layer}$ where for all $l\in [1;L]$ $\Ci{l}=(\Uci{l}, \Ua,\Gci{l})$, $\Uci{l}$ is the controller domain, and $\Gci{l}:\Uci{l}\fun 2^{\Ua}$ is the feedback control map. $\Cset$ is composable with $\Saset$ if $\Ci{l}$ is composable with $\Sa{ l}$ for all $\layer$, i.e., $\Uci{l}\subseteq \Xa{ l}$. We denote by $\dom{\Cset}=\bigcup_\layer\Uci{l}$ the domain of $\Cset$.
Given a multi-layered controller $\Cset$, we define the \emph{quantizer induced by $\Cset$} as the 
map $\Qset:X\fun 2^{\Xaall}\setminus\set{\emptyset}$ with $\Xaall=\bigcup_\layer \Xa{l}$ s.t.\ for all $x\in X$ it holds that $\xa\in\Qset(x)$ iff 
 there exists $\layer$ s.t.\
  $\xa\in\Qa{l}(x)\cap\Uci{l}$
 and there exists no $l'>l$ and 
  $\xa'\in\Qa{l'}(x)\cap\Uci{l'}$.
%
We define $\img{\Qset}=\SetComp{\xa\in\Xaall}{\ExQ{x\in X}{\xa\in\Qset(x)}}$. 
Intuitively, $\Qset$ maps states $x\in X$ to the \emph{coarsest} abstract state $\xa$ that is both related to $x$ and in the domain of $\Cset$.

The \emph{closed loop system} formed by interconnecting
$\Saset$ and $\Cset$ in \emph{feedback} 
is defined by the system 
$\Saset^{cl}=(\Xaall, \Faall^{cl})$ with $\Faall^{cl}:\img{\Qset}\fun2^{\img{\Qset}}$ s.t.\ 
$\xa'\in\Faall^{cl}(\xa)$ iff there exists $\layer$, $\ua\in\Gci{l}(\xa)$ and $\xa''\in\Fa{ l}(\xa,\ua)$ s.t.\ $\xa'\in\Qset(\Qai{ l}(\xa''))$.
As $\St{l}\frr{\Qa{l}} \Sa{l}$ for all $l\in[1;L]$, $\Cset$
can be refined into a controller composable with $\Stset$ using $\Qset$ (see \cite{HsuMajumdarMallikSchmuck_HSCC18}, Sec.~3.4). This results in the closed loop system
$\Stset^{cl}=(X, \Ftall^{cl})$ with $\Ftall^{cl}:X\fun 2^X$ s.t.\
$x'\in\Ftall^{cl}(x)$ iff there exists $\xa\in \Qset(x)$, $\layer$ and $\ua\in\Gci{l}(\xa)$ s.t.\ $x'\in\Ft{ l}(x,\ua)$.
The behavior of $\Stset^{cl}$ and $\Saset^{cl}$ are defined by 
 \begin{align*}
 \Behtclset&:=\textstyle\SetComp{\xi\in X^\infty}{\AllQ{1\leq k < |\xi|}{\xi(k)\in \Ftall^{cl}(\xi(k-1))}}\\
 \Behaclset&:=\textstyle\SetComp{\xia\in \Xaall^\infty}{\AllQ{1\leq k < {|\xia|}}{\xia(k)\in \Faall^{cl}(\xia(k-1))}}.
\end{align*}
 Note that $\Behaclset$ contains trajectories composed from abstract states of different coarseness and $\Behtclset$ contains trajectories with non-uniform sampling time.

\section{Problem Statement}\label{sec:SafetyProblems}
\label{sec:safety control}

Consider a set of safe states $\Tset{}\subseteq Y\subseteq X$, where $Y$ is the global safety requirement used to construct the \emph{finite} abstractions $\set{\Sa{l}}_\layer$ (see \REFsec{sec:prelim}). Then we consider the safety control problem $\tuple{\Sigma,\Tset{}}$ which
asks for a controller to be constructed such that every trajectory $\xi$ of the closed loop system stays within $\Tset{}$ at \emph{sampling instances}.

A multi-layered controller $\Cset$ therefore solves $\tuple{\Sigma,\Tset{}}$ if for all $\xit\in\Behtclset$ holds that $\xit(k)\in\Tset{}$ for all $k\in\dom{\xit}$. Note that in this case the considered sampling instances might be non-uniformly spaced. By adopting a classical result of ABCS using FRR (see \cite[Sec.VI.A]{ReissigWeberRungger_2017_FRR}) to the multi-layered case (see \cite{HsuMajumdarMallikSchmuck_HSCC18}, Sec.~3.4) we know that $\Cset$ solves $\tuple{\Sigma,\Tset{}}$ in this sense, if for abstract trajectories of the closed loop formed by $\Cset$ and $\Saset{}$ holds that
\begin{equation}\label{equ:Csetsound}
 \AllQ{\xia\in\Behaclset}{\AllQ{k\in\dom{\xia}}{\Qset^{-1}(\xia(k))\subseteq\Tset{}}}.
\end{equation}
When considering an under-approximation of $\Tset{}$ by $\Tseta{l}\subseteq\Ya{l}$ for every \layer s.t.\ $\Tseta{l}=\SetComp{\xa\in\Xa{l}}{\xa\subseteq \Tset{}}$, the right side of \eqref{equ:Csetsound} states that $\xia(k)\in\img{\Qset}\cap\Xa{l}$ implies $\xia(k)\in\Tseta{l}$. That is, if $\xia(k)$ is a layer $l$ cell which is currently the largest cell in the domain of $\Cset$, then it must be contained in the under-approximation $\Tseta{l}$ of the safe set. 
We collect all multi-layered controllers which solve $\tuple{\Sigma,\Tset{}}$ in the set $\WIN(\Sigma,\Tset{})$.

It is common practice in ABCS to ensure safety for sampling times only. This implicitly assumes that sampling times and grid sizes are chosen such that no \enquote{holes} occur between consecutive cells visited in a trajectory. This can be formalized by additional assumptions on the growth rate of $f$ in \eqref{equ:def_f} which is beyond the scope of this paper.

\section{Abstraction-Based Safety Control}\label{sec:ML_Safety}

This section presents non-lazy abstraction-based safety control before its lazy version is introduced in \REFsec{sec:ML_Safety_Frontier}.

\smallskip
\noindent\textbf{Single-Layered Control}
We consider a 
safety control problem $\tuple{\Sigma,\Tset{}}$ and recall how it is commonly solved by ABCS for $L=l=1$.
In this case one iteratively computes the sets
\begin{equation}\label{equ:safe-fp}
	W^0 = \Tseta{l} \text{ and } W^{i+1} = \FCpre{l}(W^i) \cap \Tseta{l}
\end{equation}
until an iteration $N\in \mathbb{N}$ with $W^N = W^{N+1}$ is reached, where 
$\FCpre{l}:\twoup{\Xa{l}}\fun\twoup{\Xa{l}}$ is the \emph{controllable predecessor} operator, defined for a set $\Aux\subseteq\Xa{l}$ by
\begin{equation}
 \FCpre{l}(\Aux) := \set{\xa\in\Xa{l} \mid \exists \ua\in\Ua{}\;.\;\Fa{l}(\xa,\ua) \subseteq \Aux}. \label{eq:define cpre}\\
\end{equation} 
Then $C = (B,\Ua{},G)$ with $B =W^N$, and
\begin{equation}\label{equ:safe-controller}
\propImp{\ua\in\Gci{l}(\xa)}{\Fa{l}(\xa,\ua)\subseteq B}
\end{equation}
for all $\xa\in B$, is known to be a safety controller for $\tuple{\Sigma,\Tset{}}$. 

Thus, the above synthesis algorithm is sound.
However, completeness is not guaranteed; there may exist a state $x\in X$ s.t. $\Qa{l}(x)\not\in \Uc$, and there may exist a controller $C' = (\Uc',\Ua{},G')$, solving the safety control problem $\tuple{\Sigma,\Tset{}}$ s.t.\ $\Qa{l'}(x)\in \Uc'$ with $l'<l$.

\smallskip
\noindent\textbf{Multi-layered Control}
Given a sequence of $L$ abstract systems $\Saset:=\set{\Sa{l}}_\layer$ we now present a non-lazy multi-layered safety algorithm formalized by the iterative function $\textproc{SafeIt}$ given as pseudo-code in \REFalg{alg:SafeIt}. 

\begin{algorithm}[t]
	\caption{Procedure $\textproc{SafeIt}$}\label{alg:SafeIt}
	\begin{algorithmic}[1]
		\Require $\Aux\subseteq \Xa{1}$, $\Aux'\subseteq \Xa{1}$, $l$, $\Cset$
		\State \textcolor{gray}{$\textproc{Explore}(\UA{l1}(\Aux)\setminus\UA{l1}(\Aux'),l)$} \label{alg:SafeIt:Explore}
		\State $W \gets \FCpre{l}(\UA{l1}(\Aux)) \cap \UA{l1}(\Aux)$\label{alg:SafeIt:computeW}
		\State $\Cset\gets\Cset\cup\set{C_l \gets (W,\Ua{},\emptyset)}$\label{alg:SafeIt:safeB}
		\State $\Aux' \gets \Aux'\cup\UA{1l}(W)$\label{alg:SafeIt:safeW}
		\If{$l\neq 1$}
			\State $\textproc{SafeIt}(\Aux,\Aux',l-1,\Cset)$ // go finer
		\Else
			\If{$\Aux \neq \Aux'$}
				\State $\textproc{SafeIt}(\Aux',\emptyset,L,\emptyset)$ // start new iteration\label{alg:SafeIt:iterate}
			\Else
				\State \Return $\langle \Aux,\Cset \rangle$ // terminate \label{alg:SafeIt:termination}
			\EndIf
		\EndIf
	\end{algorithmic}
\end{algorithm}
In order to map abstract states between different layers of abstraction, $\textproc{SafeIt}$ uses the operator 
	\begin{equation}\label{equ:Gamma}
	 		\UA{ll'}(\Upsilon_{l'}) = 
			\begin{cases}
				\Ra{ll'}(\Upsilon_{l'}), & l \leq l' \\
				\SetComp{\hat{x} \in \Xa{l}}{\Ra{l'l}(\hat{x})\subseteq \Upsilon_{l'}}, & l > l' 
			\end{cases}
	\end{equation}
where $l,l'\in [1;L]$ and $\Upsilon_{l'}\subseteq \Xa{l'}$. 
The operation $\UA{ll'}(\cdot)$ under-approximates a set of layer $l'$ to a set of layer $l$. 

In contrast to \REFsec{sec:ML_Safety_Frontier}, we consider non-lazy synthesis in this section which assumes that $\Sa{l}$ is pre-computed for all states within the safe set in every \layer before $\textproc{SafeIt}$ is called. This can be formalized by a wrapper function $\textproc{EagerSafe}(\Tseta{1},L)$ which first calls $\textproc{Explore}(\UA{l1}(\Tseta{1}),l)=\textproc{Explore}(\Tseta{l},l)$ (see \REFalg{alg:Explore}) for every \layer and then calls $\textproc{SafeIt}(\Tseta{1},\emptyset,L,\emptyset)$.

%


\begin{algorithm}[t]
	\caption{$\textproc{Explore}$}\label{alg:Explore}
	\begin{algorithmic}[1]
		\Require $\Aux\subseteq \Xa{l}$, $l$
		\For {$\xa\in\Aux,\ua\in\Ua$}
		\If {$\Fa{l}(\xa,\ua)$ is undefined}
		\State compute $\Fa{l}(\xa,\ua)$ as in \eqref{eq:next state abs sys 0}
		\EndIf
		\EndFor
	\end{algorithmic}
\end{algorithm}
Due to the monotonic nature of the iterative computation of safe sets, the set $\Aux$ in \REFalg{alg:SafeIt} is always a subset of $\Tseta{1}$ (see \REFlem{lem:monotonicity} for a formal proof). This implies that line~\ref{alg:SafeIt:Explore} of \REFalg{alg:SafeIt} (indicated in gray) will never perform any exploration (as all needed transition relations are pre-computed) and can therefore be ignored in this section.

When initialized with $\textproc{SafeIt}(\Tseta{1},\emptyset,L,\emptyset)$, \REFalg{alg:SafeIt} performs the following computations: it starts in layer $l=L$ with an outer recursion count $i=1$ (not shown in \REFalg{alg:SafeIt}) and reduces $l$, one step at the time, until $l=1$ is reached, at which point it then starts over again from layer $L$ with $i=i+1$ and a new safe set $\Aux'$. In every such iteration $i$, one step of the safety fixed-point is performed for every layer and the resulting set is stored in the layer $1$ map $\Aux'\subseteq \Xa{1}$, whereas $\Aux\subseteq \Xa{1}$ keeps the knowledge of the previous iteration. If the finest layer is reached and we have $\Aux=\Aux'$, the algorithm terminates. Otherwise $\Aux'$ is copied to $\Aux$, $\Aux'$ and $\Cset$ are reset to $\emptyset$ and $\textproc{SafeIt}$ starts a new iteration (see line~\ref{alg:SafeIt:iterate}).
After $\textproc{SafeIt}$ has terminated, it returns a multi-layered controller $\Cset=\set{\Ci{l}}_\layer$ which only contains the domains of the respective controllers $\Ci{l}$ for every layer (see line~\ref{alg:SafeIt:termination}). The transition functions $\Gci{l}$ can be computed by choosing one input $\ua\in\Ua$ for every $\xa\in \Uci{l}$ s.t.\ 
\begin{equation}\label{def:safecontrollerGci}
 \propImp{\ua=\Gci{l}(\xa)}{\Fa{l}(\xa,\ua)\subseteq \UA{l1}(\Aux)}.
\end{equation}


Note that states encountered for layer $l$ in iteration $i$ are saved to the lowest layer $1$ (line~\ref{alg:SafeIt:safeW} of \REFalg{alg:SafeIt}) and \enquote{loaded} back to the respective layer $l$ in iteration $i+1$ (line~\ref{alg:SafeIt:computeW} of \REFalg{alg:SafeIt}). Therefore, a state $\xa\in\Xa{l}$ with $l>1$, which was not contained in $W$ as computed in layer $l$ and iteration $i$ via line~\ref{alg:SafeIt:computeW} of \REFalg{alg:SafeIt}, might still be included in $\UA{l1}(\Aux)$ loaded in the next iteration $i+1$ when re-computing line~\ref{alg:SafeIt:computeW} for $l$. This happens if all states $x\in\xa$ were added to $\Aux'$ by some layer $l'<l$ in iteration $i$. 
This allows the algorithm to \enquote{bridge} regions that require a finer grid and to use layer $L$ in all remaining regions of the state space. 
The latter is not true for the multi-layered safety algorithm given in \cite{HsuMajumdarMallikSchmuck_HSCC18}, Sec.~4.3 as shown by the example in \REFsec{sec:spiral}.


\smallskip
\noindent\textbf{Soundness and Relative Completeness}\footnote{Absolute completeness of controller synthesis cannot be guaranteed by ABCS; we therefore provide completeness relative to the finest layer.}
Due to the effect described above, the map $W$ encountered in line~\ref{alg:SafeIt:computeW} for a particular layer $l$ throughout different iterations $i$ might not be monotonically shrinking. However, the latter is true for layer $1$, which is formalized by the following lemma.

\begin{lemma}\label{lem:monotonicity}
 Let $\Aux^0:=\Tseta{1}$ and let $\textproc{SafeIt}$ be called by $\textproc{EagerSafe}(\Tseta{1},L)$, terminating after $N$ iterations. Further, set $\Aux^i:=\Aux'$ whenever \REFalg{alg:SafeIt} reaches line~\ref{alg:SafeIt:safeW} with $l=1$ for the $i$-th time. Then it holds that $\Aux^i\subseteq\Aux^{i-1}$, hence $\Aux^i\subseteq\Tseta{1}$ for all $i\leq N$.
\end{lemma}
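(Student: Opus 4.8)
The plan is to prove the inclusion $\Aux^i\subseteq\Aux^{i-1}$ for each $i\ge 1$ and then chain these to reach $\Tseta{1}$. First I would extract a closed-form description of $\Aux^i$ directly from the recursion of \REFalg{alg:SafeIt}, and then establish a one-sided inverse (Galois-type) property of the inter-layer operators $\UA{ll'}$. The punchline is that the inclusion for a single step needs no induction hypothesis at all: the round-trip bound holds for an arbitrary starting set, so the chaining at the end is the only place induction (really just transitivity) enters.

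I would first unroll one full iteration. Iteration $i$ is entered through the call $\textproc{SafeIt}(\Aux^{i-1},\emptyset,L,\emptyset)$, and on each recursive \enquote{go finer} call the first argument is passed unchanged while only the second argument $\Aux'$ is accumulated (\REFline{alg:SafeIt:safeW}). Hence the first argument stays fixed at $\Aux^{i-1}$ throughout the descent $l=L,\dots,1$, and the value recorded when \REFline{alg:SafeIt:safeW} is reached with $l=1$ for the $i$-th time is
\[
 \Aux^i \;=\; \bigcup_{l=1}^{L}\UA{1l}\bigl(W^i_l\bigr),\qquad W^i_l := \FCpre{l}(\UA{l1}(\Aux^{i-1}))\cap\UA{l1}(\Aux^{i-1}).
\]
The explicit intersection in \REFline{alg:SafeIt:computeW} immediately gives $W^i_l\subseteq\UA{l1}(\Aux^{i-1})$ for every \layer.

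Second, I would establish the key round-trip inclusion: for every $\Upsilon\subseteq\Xa{1}$ and every $l\in[1;L]$, $\UA{1l}(\UA{l1}(\Upsilon))\subseteq\Upsilon$. For $l>1$ this unfolds, via \eqref{equ:Gamma} and \eqref{equ:Ra}, to $\Ra{1l}\bigl(\{\xa\in\Xa{l}\mid \Ra{1l}(\xa)\subseteq\Upsilon\}\bigr)$; the inner set collects exactly the coarse cells all of whose layer-$1$ refinements lie in $\Upsilon$, so re-refining them with $\Ra{1l}$ yields only layer-$1$ cells already contained in $\Upsilon$. For $l=1$ no coarsening occurs and the inclusion is immediate. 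Since $\UA{1l}=\Ra{1l}$ is monotone, applying it to $W^i_l\subseteq\UA{l1}(\Aux^{i-1})$ gives $\UA{1l}(W^i_l)\subseteq\UA{1l}(\UA{l1}(\Aux^{i-1}))\subseteq\Aux^{i-1}$ for each layer, and taking the union over \layer yields $\Aux^i\subseteq\Aux^{i-1}$.

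Finally, chaining $\Aux^N\subseteq\cdots\subseteq\Aux^1\subseteq\Aux^0=\Tseta{1}$ gives $\Aux^i\subseteq\Tseta{1}$ for all $i\leq N$. I expect the main obstacle to be the round-trip property, since it is where the interaction between under-approximation (coarsening, the $l>l'$ branch of \eqref{equ:Gamma}) and refinement (the $l\le l'$ branch) must be pinned down, and where the boundary-overlap subtleties of the grid-aligned \emph{closed} cells have to be argued away so that $\UA{1l}\circ\UA{l1}$ genuinely only discards cells and never introduces new ones. Everything else is bookkeeping on the recursion structure together with monotonicity of the operators involved.
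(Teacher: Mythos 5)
Your proof is correct and follows essentially the same route as the paper's: observe $W^i_l\subseteq\UA{l1}(\Aux^{i-1})$ from the intersection in \REFline{alg:SafeIt:computeW}, push this through $\UA{1l}$ to land back inside $\Aux^{i-1}$, and take the union over \layer. The round-trip inclusion $\UA{1l}(\UA{l1}(\Upsilon))\subseteq\Upsilon$ that you isolate as a separate step is exactly what the paper invokes implicitly with the phrase \enquote{using \eqref{equ:Gamma}}, so your write-up is just a more explicit version of the same argument (and your closing remark about boundary overlap of closed cells is a real subtlety that the paper also glosses over).
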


\begin{proof}
 Let $W^i_l$ be the set computed in line~\ref{alg:SafeIt:computeW} of \REFalg{alg:SafeIt} in the $i$-th iteration for $l$ and observe that 
 $W^i_l=\FCpre{l}(\UA{l1}(\Aux^{i-1})) \cap \UA{l1}(\Aux^{i-1})\subseteq\UA{l1}(\Aux^{i-1})$. Using \eqref{equ:Gamma} this implies $\UA{1l}(W^i_l)\subseteq\Aux^{i-1}$. As $\Aux^i=\bigcup_\layer\UA{1l}(W^i_l)$ we have $\Aux^i\subseteq\Aux^{i-1}$.
\end{proof}

This leads to our first main result, showing that $\textproc{EagerSafe}(\Tseta{1},L)$ is sound and relatively complete.

\begin{theorem}\label{thm:EagerSafe}
Let $\tuple{\Sigma,\Tset}$ be a safety control problem and $\Saset = \set{\Sa{l}}_\layer$ a sequence of abstractions. Let $\langle \Aux^N,\Cset \rangle=\textproc{EagerSafe}(\Tseta{1},L)$ s.t.\ $\Cset=\set{\Ci{l}}_\layer$ and $\Gci{l}$ is defined as in \eqref{def:safecontrollerGci} for all \layer. Further, let $\overline{B}$ be the domain of the single-layer safety controller computed using \eqref{equ:safe-fp} for $l=1$.
Then $\Cset\in\WIN(\Sigma,\Tset{})$ and $\overline{B}\subseteq\Aux^N$, i.e. $\Cset$ is sound and relatively complete w.r.t.\ single-layer control for layer $l=1$. 
\end{theorem}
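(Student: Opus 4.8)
The plan is to prove the two assertions separately: soundness ($\Cset\in\WIN(\Sigma,\Tset{})$) via the abstract sufficient condition \eqref{equ:Csetsound}, and relative completeness ($\overline{B}\subseteq\Aux^N$) by induction on the iteration count. Throughout I would use that by \REFlem{lem:monotonicity} the iteration stabilizes, so the returned tuple satisfies $\Aux^N=\Aux^{N-1}$, and the final pass produces, for each \layer, the domain $\Uci{l}=\FCpre{l}(\UA{l1}(\Aux^N))\cap\UA{l1}(\Aux^N)$. I would also fix the standing geometric fact that the grids are \emph{nested}: since $\n{l}=2^{\,l-1}\n{1}$ and cells are grid-aligned, every coarse cell is exactly a union of layer-$1$ cells, so a layer-$1$ cell meeting a coarser cell is contained in it (up to boundary overlaps, absorbed into the paper's ``no holes'' assumption).

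For soundness I would establish two facts. \emph{(a) Domain containment:} nesting gives $\UA{l1}(\Tseta{1})=\Tseta{l}$, and by \REFlem{lem:monotonicity} we have $\Aux^N\subseteq\Tseta{1}$, so monotonicity of $\UA{l1}(\cdot)$ yields $\Uci{l}\subseteq\UA{l1}(\Aux^N)\subseteq\UA{l1}(\Tseta{1})=\Tseta{l}$. \emph{(b) Domain invariance:} for $\xa\in\Uci{l}$, since $\xa\in\FCpre{l}(\UA{l1}(\Aux^N))$ the map \eqref{def:safecontrollerGci} can select some $\ua$ with $\Fa{l}(\xa,\ua)\subseteq\UA{l1}(\Aux^N)$, so every successor $\xa''$ lies in $\UA{l1}(\Aux^N)$. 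Unfolding \eqref{equ:Gamma}, every layer-$1$ cell meeting $\xa''$ lies in $\Aux^N=\bigcup_{l'}\UA{1l'}(\Uci{l'})$, hence by nesting each $x\in\Qai{l}(\xa'')$ is contained in some domain cell and $\Qset(x)\neq\emptyset$. Thus the closed-loop successor relation is nonblocking on $\img{\Qset}$ and maps into $\img{\Qset}$, so every trajectory of $\Behaclset$ stays in $\img{\Qset}$. Combining the definitional inclusion $\img{\Qset}\cap\Xa{l}\subseteq\Uci{l}$ with (a)'s $\Uci{l}\subseteq\Tseta{l}$, every visited cell satisfies $\Qset^{-1}(\xia(k))\subseteq\Tset{}$, which is exactly \eqref{equ:Csetsound}; hence $\Cset\in\WIN(\Sigma,\Tset{})$.

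For relative completeness, let $\overline{B}$ be the greatest fixpoint of \eqref{equ:safe-fp} at $l=1$, so $\overline{B}=\FCpre{1}(\overline{B})\cap\Tseta{1}$ and $\overline{B}\subseteq\Tseta{1}$. I would show $\overline{B}\subseteq\Aux^i$ for all $i$ by induction. The base case is $\overline{B}\subseteq\Tseta{1}=\Aux^0$. For the step, assume $\overline{B}\subseteq\Aux^{i-1}$; since $\UA{11}(\cdot)$ is the identity (the $l=l'$ case of \eqref{equ:Gamma}), the layer-$1$ contribution to $\Aux^i$ is $\UA{11}(W^i_1)=\FCpre{1}(\Aux^{i-1})\cap\Aux^{i-1}$. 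Monotonicity of $\FCpre{1}$ and the hypothesis give $\FCpre{1}(\overline{B})\subseteq\FCpre{1}(\Aux^{i-1})$, together with $\overline{B}\subseteq\Aux^{i-1}$, so $\overline{B}=\FCpre{1}(\overline{B})\cap\Tseta{1}\subseteq\FCpre{1}(\Aux^{i-1})\cap\Aux^{i-1}\subseteq\Aux^i$. Taking $i=N$ gives $\overline{B}\subseteq\Aux^N$.

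I expect the main obstacle to be the domain-invariance step (b): one must track a coarse successor cell \emph{down} to layer $1$ through $\UA{l1}(\cdot)$ and then \emph{back up} to the possibly different layers that actually contributed it to $\Aux^N$ via $\bigcup_{l'}\UA{1l'}(\Uci{l'})$, showing that the quantizer $\Qset$ remains total on the reachable region. Making this rigorous leans essentially on grid nesting and on carefully handling the closed-cell boundary overlaps. The completeness induction, by contrast, is routine once one observes that the layer-$1$ slice of each iteration reproduces exactly the single-layer safety recursion \eqref{equ:safe-fp}.
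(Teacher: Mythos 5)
Your proposal is correct and follows essentially the same route as the paper: soundness is reduced to the two conditions $\Uci{l}\subseteq\Tseta{l}$ and existence of a control value keeping successors inside $\UA{l1}(\Aux^N)$ (using \REFlem{lem:monotonicity} and $\Aux^N=\Aux^{N-1}$), and completeness is an induction comparing the layer-$1$ slice of $\Aux^i$ with the single-layer recursion \eqref{equ:safe-fp}. The only cosmetic differences are that you run the completeness induction on the fixpoint $\overline{B}$ rather than on the iterates $\widetilde{W}^i$ (which neatly avoids comparing the two termination counts), and that you spell out the totality of $\Qset$ on successor cells, which the paper dispatches with ``it follows from the definition of $\Qset$.''
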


\begin{proof}
To prove \emph{soundness}, i.e., $\Cset\in\WIN(\Sigma,\Tset{})$, we show that \eqref{equ:Csetsound} holds. This is true if for all \layer and $\xa\in\Uci{l}$, it holds that (i) $\xa\in\Tseta{l}$ and (ii) there exists $\ua\in\Gci{l}(\xa)$ s.t. $\Fa{l}(\xa,\ua) \neq \emptyset$, and for all $\xa''\in\Fa{l}(\xa,\ua)$,\ $\Qset(\Qai{ l}(\xa''))\neq\emptyset$. 
As \REFlem{lem:monotonicity} implies $W^i_l\subseteq\UA{l1}(\Aux^{i-1})$, $\Aux^i\subseteq\Tseta{1}$ and $\UA{l1}(\Tseta{1})=\Tseta{l}$ we have $W^i_l\subseteq\Tseta{l}$. Further, line~\ref{alg:SafeIt:safeB} of \REFalg{alg:SafeIt} implies $\Uci{l}=W^N_l\subseteq\Tseta{l}$, proving (i). As $\Aux^N=\Aux^{N-1}$  line~\ref{alg:SafeIt:computeW} of \REFalg{alg:SafeIt} implies $\Uci{l}=W^N_l\subseteq\FCpre{l}(\UA{l1}(\Aux^N))$. Hence there is $\ua$ s.t.\ \eqref{def:safecontrollerGci} holds (from \eqref{eq:define cpre}), implying that $\Fa{l}(\xa,\ua)\neq \emptyset$. It follows from the definition of $\Qset$ that $\Qset(\Qai{ l}(\xa''))\neq\emptyset$.

We prove \emph{completeness}, i.e., $\overline{B}\subseteq\Aux^N$, by induction in $i\in [1;N]$. Given $\widetilde{W}^i$ as in \eqref{equ:safe-fp} for $l=1$ we show $\widetilde{W}^i\subseteq\Aux^i$. The base case is $\widetilde{W}^0=\Tseta{1}=\Aux^0$. The induction step gives
\begin{align*}
 \widetilde{W}^i&=\FCpre{1}(\widetilde{W}^{i-1}) \cap \widetilde{W}^{i-1}\subseteq\FCpre{1}(\Aux^{i-1}) \cap \Aux^{i-1}\\
 &\subseteq \textstyle\cup_\layer(\FCpre{l}(\UA{l1}(\Aux^{i-1})) \cap \UA{l1}(\Aux^{i-1}))=\Aux^i.
\end{align*}
With this, it obviously holds that $\overline{B}=\widetilde{W}^{\widetilde{N}}\subseteq\Aux^N$.
\end{proof}

\section{Computing Abstractions Lazily}\label{sec:ML_Safety_Frontier}

As our main contribution, we now consider the case where the multi-layered abstraction $\Saset$ is not pre-computed. This is implemented by $\textproc{LazySafe}(\Tseta{1},L)$ which simply calls $\textproc{SafeIt}_1(\Tseta{1},\emptyset,L,\emptyset)$. 
%
With this, line~\ref{alg:SafeIt:Explore} of \REFalg{alg:SafeIt} is used to explore transitions in all states in layer $l$ which are (i) not marked unsafe by all layers in the previous iteration, i.e., are in $\UA{l1}(\Aux)$, but (ii) cannot stay safe for $i$ times-steps in any layer $l'>l$, i.e., are not in $\UA{l1}(\Aux')$. 
In the first iteration of $\textproc{SafeIt}(\Tseta{1},\emptyset,L,\emptyset)$ the set $\UA{l1}(\Aux)\setminus\UA{l1}(\Aux')$ is same as $\UA{1L}(\Tseta{1})=\Tseta{L}$. Hence, for layer $L$ all transitions for states inside the safe set are pre-computed in the first iteration of \REFalg{alg:SafeIt}. 
This is in general not true for lower layers $l<L$. 

To ensure that the lazy exploration of the state space is still sound and relatively complete, we show in the following lemma that all states which need to be checked for safety in layer $l$ of iteration $i$ are indeed explored.


\begin{lemma}\label{lem:EffectLazyness}
Let $W^i_l$ and $\Aux'^i_l$ (resp. $\widetilde{W}^i_l$ and $\widetilde{\Aux}'^i_l$) denote the set computed in line~\ref{alg:SafeIt:computeW} and \ref{alg:SafeIt:safeW} of \REFalg{alg:SafeIt} (called by $\textproc{EagerSafe}(\Tseta{1},L)$ and $\textproc{LazySafe}(\Tseta{1},L)$, resp.) the $i$-th time for $l$.  
Then it holds for all $i\in [1;N]$ and \layer that
 \begin{align}\label{equ:tildeWpl}
  \widetilde{W}^i_l\subseteq W^i_l\quad\text{and}\quad \Aux'^i_{l}=\widetilde{\Aux}'^i_{l}.
 \end{align}
\end{lemma}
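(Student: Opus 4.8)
The plan is to proceed by a double induction: an outer induction on the iteration index $i\in[1;N]$ and, for each fixed $i$, an inner downward induction on the layer $l$ from $L$ to $1$. The one structural fact I would lean on throughout is that the two runs differ only in which transitions are available when \REFline{alg:SafeIt:computeW} executes: $\textproc{EagerSafe}$ has every $\Fa{l}(\xa,\ua)$ precomputed, whereas $\textproc{LazySafe}$ knows only those explored by \REFline{alg:SafeIt:Explore}. Since a state $\xa$ whose transitions are still undefined can never pass the inclusion test $\Fa{l}(\xa,\ua)\subseteq\cdot$ underlying $\FCpre{l}$, it is silently dropped from the lazily computed controllable predecessor. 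Proving \eqref{equ:tildeWpl} thus reduces to bookkeeping on exactly which states get explored and to arguing that the states lazily dropped from $\widetilde{W}^i_l$ never change the accumulated set $\widetilde{\Aux}'^i_l$.

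For the outer induction I would carry the hypothesis that the starting set of iteration $i$ coincides in both runs, i.e.\ $\Aux'^{i-1}_1=\widetilde{\Aux}'^{i-1}_1$; the base case $i=1$ is immediate, since both invoke \REFalg{alg:SafeIt} with $\Aux=\Tseta{1}$. Fixing $i$, the inner induction starts at $l=L$, where the accumulated set is $\Aux'=\emptyset$, so \REFline{alg:SafeIt:Explore} explores all of $\UA{L1}(\Aux)$; every transition needed to evaluate $\FCpre{L}$ on $\UA{L1}(\Aux)$ is then present in the lazy run, giving $\widetilde{W}^i_L=W^i_L$ and hence $\widetilde{\Aux}'^i_L=\Aux'^i_L$. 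For the step I would assume $\widetilde{\Aux}'^i_{l+1}=\Aux'^i_{l+1}$ and observe that the lazily explored set at layer $l$ is $\UA{l1}(\Aux)\setminus\UA{l1}(\Aux'^i_{l+1})$. On the states of $\UA{l1}(\Aux)$ lying outside $\UA{l1}(\Aux'^i_{l+1})$ the transitions are available and the target set $\UA{l1}(\Aux)$ is the same in both runs, so $\FCpre{l}$ returns the same verdict; the states in $\UA{l1}(\Aux)\cap\UA{l1}(\Aux'^i_{l+1})$ have undefined transitions and are dropped lazily. This already yields $\widetilde{W}^i_l\subseteq W^i_l$, with the omitted states contained in $\UA{l1}(\Aux'^i_{l+1})$.

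The hard part will be recovering the \emph{equality} $\widetilde{\Aux}'^i_l=\Aux'^i_l$ from this merely-inclusion relation on the $W$'s. Here I would exploit the duality baked into \eqref{equ:Gamma}: for $l>1$, the membership $\xa\in\UA{l1}(\Aux'^i_{l+1})$ is by definition exactly $\Ra{1l}(\xa)\subseteq\Aux'^i_{l+1}$, whereas $\UA{1l}(\set{\xa})=\Ra{1l}(\xa)$. Hence every state in $W^i_l\setminus\widetilde{W}^i_l\subseteq\UA{l1}(\Aux'^i_{l+1})$ maps under $\UA{1l}$ into $\Aux'^i_{l+1}$, which gives $\UA{1l}(W^i_l)\subseteq\UA{1l}(\widetilde{W}^i_l)\cup\Aux'^i_{l+1}$. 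Combining this with the inner hypothesis $\widetilde{\Aux}'^i_{l+1}=\Aux'^i_{l+1}$ and the monotonicity of $\UA{1l}$ forces $\widetilde{\Aux}'^i_l=\Aux'^i_l$ (the case $l=1$ being identical, with $\UA{11}$ acting as the identity and the explored set reducing to $\Aux\setminus\Aux'$). Taking $l=1$ then gives $\widetilde{\Aux}'^i_1=\Aux'^i_1$, which is precisely the outer hypothesis required for iteration $i+1$, closing the double induction.
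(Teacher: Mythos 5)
Your proposal is correct and follows essentially the same route as the paper: a double induction (outer over $i$, inner downward over $l$) whose crux is that every state dropped from $\widetilde{W}^i_l$ lies in $\UA{l1}(\Aux'^i_{l+1})$ and therefore contributes nothing new to the accumulated layer-$1$ set, which is exactly the chain of inclusions the paper writes out. The only nitpick is your claim that states in $\UA{l1}(\Aux)\cap\UA{l1}(\Aux'^i_{l+1})$ have undefined transitions --- some may have been explored in earlier iterations, as the paper itself notes --- but your argument only uses the containment of the dropped states in $\UA{l1}(\Aux'^i_{l+1})$, which is unaffected.
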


\begin{proof}
First observe that the algorithm \REFalg{alg:SafeIt} (called by $\textproc{EagerSafe}(\Tseta{1},L)$ and $\textproc{LazySafe}(\Tseta{1},L)$, resp.) starts with $i=1$ and $l=L$. It first decrements $l$ (while keeping $i$ constant) until $l=1$ is reached, and then increments $i$ to $i+1$ and resets $l=1$ to $l=L$. We prove invariance of $W$ and $\Aux'$ to both steps separately, to show that \eqref{equ:tildeWpl} holds.

First consider the incrementation of $i$ in line~\ref{alg:SafeIt:iterate} of \REFalg{alg:SafeIt}. This implies that $\Aux'$ and $\widetilde{\Aux}'$ are copied to $\Aux$ and $\widetilde{\Aux}$. Hence, given the notation of this lemma, we have $\Aux^{i+1}_L=\Aux'^i_1$ and $\widetilde{\Aux}^{i+1}_L=\widetilde{\Aux}'^i_1$. Given this observation we do an induction over $i$ to show that $\widetilde{W}^{i+1}_L= W^{i+1}_L$ and $\Aux'^{i+1}_{L}=\widetilde{\Aux}'^{i+1}_{L}$ are true for all $i$. For $i=1$ (base case), $\Aux'^1_{1}=\widetilde{\Aux}'^1_{1}=\Tseta{1}$, which in turn implies $\widetilde{W}^{1}_L= W^{1}_L$.
%
For the induction step, observe that 
the induction hypothesis implies $\Aux'=\widetilde{\Aux}'$ and $\Aux=\widetilde{\Aux}$ in the right side of line~\ref{alg:SafeIt:Explore}-\ref{alg:SafeIt:safeW} of \REFalg{alg:SafeIt}, no matter if \REFalg{alg:SafeIt} is called by $\textproc{EagerSafe}(\Tseta{1},L)$ or $\textproc{LazySafe}(\Tseta{1},L)$. This implies $\widetilde{W}^{i+1}_L= W^{i+1}_L$ (computed in line \ref{alg:SafeIt:computeW}) and $\Aux'^{i+1}_{L}=\widetilde{\Aux}'^{i+1}_{L}$ (updated in line~\ref{alg:SafeIt:safeW}) whenever the claim holds for $i$. 

Second, we consider decrementing $l$ while keeping $i$ constant. We do an induction over $l$ by assuming $\Aux'^{i}_{l+1}=\widetilde{\Aux}'^{i}_{l+1}$ and show that this implies $\widetilde{W}^i_l\subseteq W^i_l$ and $\Aux'^i_{l}=\widetilde{\Aux}'^i_{l}$. Observe that the base case is $l+1=L$, which was established by the induction over $i$. 
For the induction step over $l$, observe that the induction assumption implies $\UA{l1}(\widetilde{\Aux}^{i-1})\setminus\UA{l1}(\widetilde{\Aux}'^i_{l+1})=\UA{l1}(\Aux^{i-1})\setminus\UA{l1}(\Aux'^i_{l+1})$ and, as  $\Aux^i\subseteq\Tseta{1}$ (from \REFlem{lem:monotonicity}), we have $\UA{l1}(\Aux^{i-1})\setminus\UA{l1}(\Aux'^i_{l+1})\subseteq\UA{l1}(\Tseta{1})$. Further, observe that $\textproc{EagerSafe}(\Tseta{1},L)$ explores $\UA{l1}(\Tseta{1})$ once, while $\textproc{LazySafe}(\Tseta{1},L)$ explores $\UA{l1}(\Aux^{i-1})\setminus\UA{l1}(\Aux'^i_{l+1})$ via line~\ref{alg:SafeIt:Explore} of \REFalg{alg:SafeIt} in every iteration $i$. This implies that for the computation of $\widetilde{W}^i_l$ in line~\ref{alg:SafeIt:iterate} of \REFalg{alg:SafeIt} via \eqref{eq:define cpre} the transition function $\Fa{l}(\xa,\ua)$ is computed for a subset of states compared to the computation of $W^i_l$. With this it immediately follows from \eqref{eq:define cpre} that $\widetilde{W}^i_l\subseteq W^i_l$.
 To show the right side of \eqref{equ:tildeWpl}, recall that $\Fa{l}(\xa,\ua)$ is at least computed for the set $\UA{l1}(\Aux^{i-1})\setminus\UA{l1}(\Aux'^i_{l+1})$ via line~\ref{alg:SafeIt:Explore} of \REFalg{alg:SafeIt} (and possibly for some more states which were explored in previous iterations) when $\widetilde{W}^i_l$ is computed. Using \eqref{eq:define cpre} this implies that 
$
 \widetilde{W}^i_l\supseteq\left(\FCpre{l}(\UA{l1}(\Aux^{i-1}))\setminus\UA{l1}(\Aux'^i_{l+1})\right)\cap\UA{l1}(\Aux^{i-1})
 = W^i_l\setminus\UA{1l}(\Aux'^i_{l+1}).$
With this we have 
$\Aux'^i_{l}
=\Aux'^i_{l+1}\cup \UA{1l}(W^i_l)
=\Aux'^i_{l+1}\cup \left(\UA{1l}(W^i_l)\setminus\Aux'^i_{l+1}\right)
=\Aux'^i_{l+1}\cup\UA{1l}\left(W^i_l\setminus\UA{l1}(\Aux'^i_{l+1})\right)
\subseteq\Aux'^i_{l+1}\cup\UA{1l}(\widetilde{W}^i_l)
=\widetilde{\Aux}'^i_{l+1}\cup\UA{1l}(\widetilde{W}^i_l)
=\widetilde{\Aux}'^i_l$ 
and 
$\widetilde{\Aux}'^i_l
=\widetilde{\Aux}'^i_{l+1}\cup\UA{1l}(\widetilde{W}^i_l)
=\Aux'^i_{l+1}\cup\UA{1l}(\widetilde{W}^i_l)
\subseteq\Aux'^i_{l+1}\cup \UA{1l}(W^i_l)
=\Aux'^i_{l}$, which completes the induction step over $l$.
\end{proof}

Now as direct consequence of \REFthm{thm:EagerSafe} and \REFlem{lem:EffectLazyness}, we present our second main result:


\begin{theorem}\label{thm:LazySafe}
Let $\tuple{\Sigma,\Tset}$ be a safety control problem and $\Saset = \set{\Sa{l}}_\layer$ a sequence of abstractions. Let $\langle \Aux^N,\Cset \rangle=\textproc{LazySafe}(\Tseta{1},L)$ s.t.\ $\Cset=\set{\Ci{l}}_\layer$ and $\Gci{l}$ be defined as in \eqref{def:safecontrollerGci} for all \layer.
Further, let $\overline{B}$ be the domain of the single-layer safety controller computed using \eqref{equ:safe-fp} for $l=1$.
Then $\Cset\in\WIN(\Sigma,\Tset{})$ and $\overline{B}\subseteq\Aux^N$, i.e., $\Cset$ is sound and relatively complete w.r.t.\ single-layer control for layer $l=1$. 
\end{theorem}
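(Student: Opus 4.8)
The plan is to obtain \REFthm{thm:LazySafe} as a direct corollary of \REFthm{thm:EagerSafe}, using \REFlem{lem:EffectLazyness} to carry the conclusion from the eager run to the lazy one. I would start from the right-hand equality of \REFlem{lem:EffectLazyness}, $\Aux'^i_l=\widetilde{\Aux}'^i_l$ for all $i$ and \layer, specialized to $l=1$: the layer-$1$ accumulator written in line~\ref{alg:SafeIt:safeW} and copied to $\Aux$ in line~\ref{alg:SafeIt:iterate} is literally the same set in both runs. Hence the outer loop of \REFalg{alg:SafeIt}, whose continuation/termination test in lines~\ref{alg:SafeIt:iterate}--\ref{alg:SafeIt:termination} inspects only $\Aux$ and $\Aux'$ at layer $1$, evolves identically under $\textproc{EagerSafe}(\Tseta{1},L)$ and $\textproc{LazySafe}(\Tseta{1},L)$; both therefore perform the same number $N$ of iterations and return the same final set $\Aux^N$.

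Given this, relative completeness requires no new argument: \REFthm{thm:EagerSafe} already establishes $\overline{B}\subseteq\Aux^N$ for the eager output, and since the lazy output is the same $\Aux^N$, the inclusion $\overline{B}\subseteq\Aux^N$ holds verbatim for $\textproc{LazySafe}(\Tseta{1},L)$. I would simply cite the completeness half of \REFthm{thm:EagerSafe}.

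For soundness I would re-verify the criterion \eqref{equ:Csetsound} for the lazy controller, whose domains are now $\Uci{l}=\widetilde{W}^N_l$ rather than $W^N_l$. The left-hand inclusion $\widetilde{W}^N_l\subseteq W^N_l$ of \REFlem{lem:EffectLazyness} gives condition (i) of the soundness argument in the proof of \REFthm{thm:EagerSafe} at once, since $W^N_l\subseteq\Tseta{l}$. For condition (ii), I would observe that every $\xa\in\widetilde{W}^N_l$ lies, by the definition of $\widetilde{W}^N_l$ through the lazily evaluated $\FCpre{l}$ in line~\ref{alg:SafeIt:computeW} and \eqref{eq:define cpre}, in $\FCpre{l}(\UA{l1}(\Aux^{N-1}))$ and carries a \emph{computed} input $\ua$ with $\Fa{l}(\xa,\ua)\subseteq\UA{l1}(\Aux^{N-1})=\UA{l1}(\Aux^N)$, the last equality being the termination identity $\Aux^N=\Aux^{N-1}$ from line~\ref{alg:SafeIt:termination}; exactly as in \REFthm{thm:EagerSafe} this $\Fa{l}(\xa,\ua)$ is nonempty. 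Since $\Aux^N=\bigcup_\layer\UA{1l}(\widetilde{W}^N_l)$ is precisely the layer-$1$ image of $\dom{\Cset}$, each successor $\xa''\in\Fa{l}(\xa,\ua)\subseteq\UA{l1}(\Aux^N)$ maps back into $\dom{\Cset}$ under $\Qset$, i.e.\ $\Qset(\Qai{l}(\xa''))\neq\emptyset$. Thus \eqref{equ:Csetsound} holds and $\Cset\in\WIN(\Sigma,\Tset{})$.

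The one genuinely delicate point I anticipate is soundness under lazy exploration: the input certifying $\xa\in\widetilde{W}^N_l$ must be backed by a transition $\Fa{l}(\xa,\ua)$ that $\textproc{LazySafe}$ actually computed, not one it skipped. This is exactly what \REFlem{lem:EffectLazyness} secures, since its proof guarantees that every state tested for safety in layer $l$ of iteration $i$ is explored via line~\ref{alg:SafeIt:Explore} before line~\ref{alg:SafeIt:computeW} evaluates $\FCpre{l}$. Once that lemma is in hand, the soundness argument reduces to re-reading the proof of \REFthm{thm:EagerSafe} on the smaller but self-consistent domains $\widetilde{W}^N_l$, and no further estimates are needed.
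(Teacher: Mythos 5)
Your proposal is correct and follows essentially the same route as the paper: it derives $\Aux^i=\widetilde{\Aux}^i$ from \REFlem{lem:EffectLazyness} so that the completeness half of \REFthm{thm:EagerSafe} transfers verbatim, and uses $\widetilde{W}^N_l\subseteq W^N_l$ together with the termination identity $\Aux^N=\Aux^{N-1}$ to re-establish conditions (i) and (ii) of the soundness criterion. Your explicit remark that the certifying input for $\xa\in\widetilde{W}^N_l$ must come from a transition actually computed by the lazy exploration is a point the paper's terse proof leaves implicit, but it is the same argument resting on the same lemma.
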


\begin{proof}
 First recall that \REFlem{lem:EffectLazyness} implies $\Aux^i=\widetilde{\Aux}^i$ for all $i\leq N$. Therefore \REFlem{lem:monotonicity} equivalently holds for $\widetilde{\Aux}^i$ and the completeness proof of \REFthm{thm:EagerSafe} is equivalent to the one of \REFthm{thm:LazySafe}. For the soundness proof, observe that \eqref{equ:tildeWpl} implies $B^l=\widetilde{W}^i_l\subseteq W^i_l\subseteq\Tseta{l}$ and $B^l=\widetilde{W}^i_l\subseteq W^i_l\subseteq\FCpre{l}(\UA{l1}(\Aux^N))$, from which (i) and (ii) follows. 
\end{proof}

\section{Experimental Evaluation} \label{sec:experiments}

We evaluate our algorithm on a benchmark DC-DC boost converter example from \cite{GirardPolaTabuada_2010, mouelhi2013cosyma, Scots}. The system $\Sigma$ is a second order differential inclusion $\dot{X}(t) \in A_pX(t) + b + W$ with two switching modes $p\in \set{1,2}$, where
\begin{align*}
	b = \begin{bmatrix}
		\frac{v_s}{x_l}\\
		0
	\end{bmatrix}, 
	A_1 = \begin{bmatrix}
		-\frac{r_l}{x_l}  &  0\\
		0 				  &  -\frac{1}{x_c}\frac{r_0}{r_0+r_c}
	\end{bmatrix},\\
	A_2 = \begin{bmatrix}
		-\frac{1}{x_l}(r_l+\frac{r_0r_c}{r_0+r_c})	&	\frac{1}{5}(-\frac{1}{x_l}\frac{r_0}{r_0+r_c})\\
		5\frac{r_0}{r_0+r_c}\frac{1}{x_c}			&	-\frac{1}{x_c}\frac{1}{r_0+r_c}
	\end{bmatrix},
\end{align*}
with $r_0 = 1$, $v_s = 1$, $r_l = 0.05$, $r_c = 0.5r_l$, $x_l = 3$, $x_c = 70$ and $W = [-0.001, 0.001]\times [-0.001, 0.001]$. A physical and more detailed description of the model can be found in \cite{GirardPolaTabuada_2010}. The safety control problem that we consider is given by $\langle \Sigma,T \rangle$, where $T = [1.15,1.55]\times [5.45,5.85]$. We evaluate the performance of our $\textproc{LazySafe}$ algorithm on this benchmark and compare it \begin{inparaenum} \item to the one presented in \cite{HsuMajumdarMallikSchmuck_HSCC18}, Sec.~4.3 which we call \textproc{ML\_Safe}, and \item to the single-layered version of \texttt{SCOTS}. \end{inparaenum} For $\textproc{LazySafe}$ and \textproc{ML\_Safe}, we vary the number of layers used. The results are presented in \REFfig{fig:runtimes}. The finest layer is common to each trial and is parameterized by $\eta_1 = [0.0005, 0.0005], \ \tau_1 = 0.0625$, with the ratio between the grid parameters and sampling times of successive layers being $2$. All experiments presented in this section were performed with a system equipped with an Intel\textsuperscript{\textregistered} Core\textsuperscript{\texttrademark} i5-6600 3.30GHz CPU and 16GB of RAM.

From \REFfig{fig:runtimes}, we see that $\textproc{LazySafe}$ is significantly faster than both \textproc{ML\_Safe} and \texttt{SCOTS} for higher value of $L$. 
The single layered case ($L=1$) takes slightly more time in both $\textproc{LazySafe}$ and \textproc{ML\_Safe} than \texttt{SCOTS} due to some extra overhead of the multi-layered algorithm. 

In \REFfig{fig:dcdc}, we visualize the domain of the constructed transitions and the synthesized controllers in each layer for $\textproc{LazySafe}(\cdot,6)$. The safe set is mostly covered by cells in the two coarsest layers. This phenomenon is responsible for the computational savings over $\textproc{LazySafe}(\cdot,1)$. 

\begin{figure}[h]
	\centering
	\begin{tikzpicture}[scale=.9,point/.style={circle,scale=0.2,draw=black,fill=red,thick}]	
	\begin{axis}[
	    	legend cell align={left},
		ybar stacked,
		bar shift=3pt,
		xlabel= No. of layers ($L$),
		ylabel= Runtime in seconds,
		axis y line=left,
		axis x line=bottom,
		xmin=0, xmax=8,
		ymin=0, ymax=500,
		xtick={0,1,2,3,4,5,6,7},
		ytick={0,100,200,300,400,500},
		xticklabel=$\pgfmathprintnumber{\tick}$,
		yticklabel=$\pgfmathprintnumber{\tick}$,
		legend style={at={(0.5,1)},anchor=north west,draw=none},
        ]
        \addplot[fill=purple]	
			coordinates {
        		(1, 28.355)
        		(2, 9.59962)
        		(3, 5.98247)
        		(4, 5.54938)
        		(5, 4.39672)
        		(6, 4.48306)
        		(7, 4.45758)	
        	};
			\addlegendentry{\textproc{LazySafe} abstraction}
		
        \addplot[fill=blue]	
			coordinates {
        		(1, 410.783)
        		(2, 75.2443)
        		(3, 31.1342)
        		(4, 14.5017)
        		(5, 13.8379)
        		(6, 14.0392)
        		(7, 14.2794)	
        	};
        	\addlegendentry{\textproc{LazySafe} synthesis};
        
	\end{axis}
	
	\begin{axis}[
	    	legend cell align={left},
		ybar stacked,
		bar shift=-3pt,
		xlabel= No. of layers ($L$),
		ylabel= Runtime in seconds,
		axis y line=left,
		axis x line=bottom,
		xmin=0, xmax=8,
		ymin=0, ymax=500,
		xtick={0,1,2,3,4,5,6,7},
		ytick={0,100,200,300,400,500},
		xticklabel=$\pgfmathprintnumber{\tick}$,
		yticklabel=$\pgfmathprintnumber{\tick}$,
		legend style={at={(0.5,0.8)},anchor=north west,draw=none},
        ]
        
        \addplot[fill=red!30]	
			coordinates {
        		(1, 28.8522)
        		(2, 37.5347)
        		(3, 41.3946)
        		(4, 41.415)
        		(5, 0)
        		(6, 0)
        		(7, 0)	
        	};
            \addlegendentry{\textproc{ML\_Safe} abstraction}
            
        \addplot[fill=black!30]	
			coordinates {
        		(1, 417.597)
        		(2, 50.0761)
        		(3, 38.5002)
        		(4, 36.1657)
        		(5, 0)
        		(6, 0)
        		(7, 0)	
        	};
            \addlegendentry{\textproc{ML\_Safe} synthesis};
	\end{axis}
	
	\begin{axis}[
	    	legend cell align={left},
		ybar stacked,
		bar shift=-9pt,
		xlabel= No. of layers ($L$),
		ylabel= Runtime in seconds,
		axis y line=left,
		axis x line=bottom,
		xmin=0, xmax=8,
		ymin=0, ymax=500,
		xtick={0,1,2,3,4,5,6,7},
		ytick={0,100,200,300,400,500},
		xticklabel=$\pgfmathprintnumber{\tick}$,
		yticklabel=$\pgfmathprintnumber{\tick}$,
		legend style={at={(0.5,0.6)},anchor=north west,draw=none},
        ]
        \addplot[fill=yellow!50]	
			coordinates {
        		(1, 26.7736)
        		(2, 0)
        		(3, 0)
        		(4, 0)
        		(5, 0)
        		(6, 0)
        		(7, 0)	
        	};
			\addlegendentry{\textproc{SCOTS} abstraction}
		
        \addplot[fill=green!50]	
			coordinates {
        		(1, 357.999)
        		(2, 0)
        		(3, 0)
        		(4, 0)
        		(5, 0)
        		(6, 0)
        		(7, 0)	
        	};
        	\addlegendentry{\textproc{SCOTS} synthesis};
        
	\end{axis}
\end{tikzpicture}	
	\vspace{-0.2cm}
	\caption{Comparison of run-time of \textproc{LazySafe}, \textproc{ML\_Safe} and \texttt{SCOTS} on the DC-DC boost converter example. $L\geq 5$ was not used for \textproc{ML\_Safe} since more coarser layers failed to produce a non-empty winning set. The same was true for $L\geq 8$ for \textproc{LazySafe}.}	
	\label{fig:runtimes}
\end{figure}
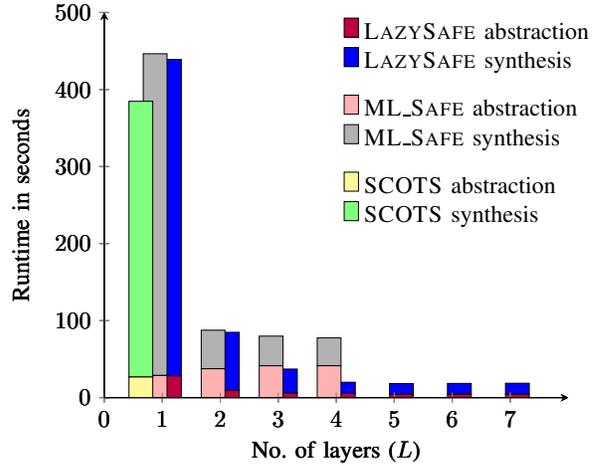

\begin{figure}[h]
	\centering
	\includegraphics[width=1\columnwidth]{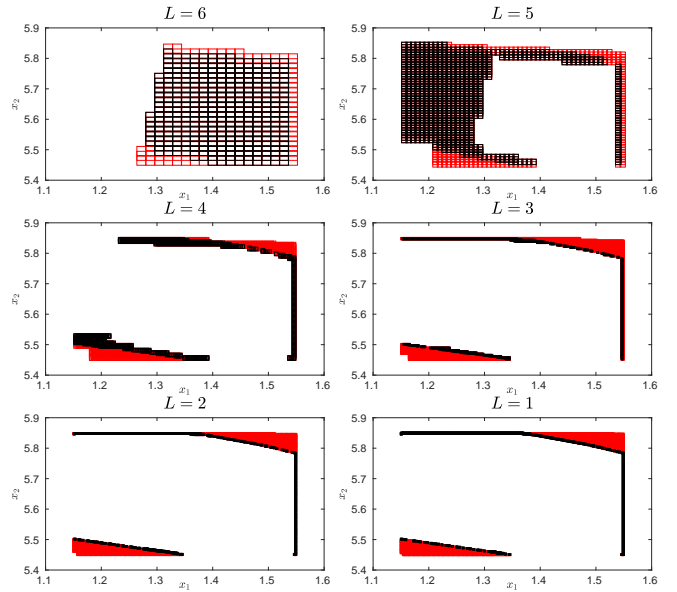}
	\vspace{-0.3cm}
	\caption{Domain of the computed transitions (union of red and black region) and the synthesized controllers (black region) for the DC-DC boost converter example, computed by $\textproc{LazySafe}(\cdot,6)$.}
	\label{fig:dcdc}
\end{figure}

 

\bibliographystyle{abbrv}

\end{document}